\newcolumntype{C}[1]{>{\hsize=#1\hsize\centering\arraybackslash}X}
\def\vec#1{\mathchoice{\mbox{\boldmath$\displaystyle#1$}}
  {\mbox{\boldmath$\textstyle#1$}}
  {\mbox{\boldmath$\scriptstyle#1$}}
  {\mbox{\boldmath$\scriptscriptstyle#1$}}}
\newcommand{\mat}[1]{\mathbf{#1}}
\newcommand{\definedas}{\overset{\mathrm{def}}{=}}
\newcommand{\etal}{\textit{et~al.}\xspace}
\newcommand{\ie}{\textit{i.e.},\xspace}
\newcommand{\mech}{\ensuremath{\mathcal{M}}\xspace}
\newtheorem{definition}{Definition}[section]
\newtheorem{theorem}{Theorem}
\newtheorem{corollary}{Corollary}
\newtheorem{dfn}{Definition}
\algnewcommand\algorithmicinput{\textbf{Input:}}
\algnewcommand\Input{\item[\algorithmicinput]}
\algnewcommand\algorithmicoutput{\textbf{Output:}}
\algnewcommand\Output{\item[\algorithmicoutput]}
\algnewcommand\algorithmictier{\textbf{Step:}}
\algnewcommand\Tier{\item[\algorithmictier]}
\def\BibTeX{{\rm B\kern-.05em{\sc i\kern-.025em b}\kern-.08em
    T\kern-.1667em\lower.7ex\hbox{E}\kern-.125emX}}
\begin{document}
\title{Towards Distributed Privacy-Preserving Prediction}

\author{\IEEEauthorblockN{1\textsuperscript{st} Lingjuan~Lyu}
\IEEEauthorblockA{\textit{Department of Computer Science} \\
\textit{National University of Singapore} \\
% Singapore \\
lyulj@comp.nus.edu.sg}
\and
\IEEEauthorblockN{2\textsuperscript{nd} Yee Wei~Law}
\IEEEauthorblockA{\textit{School of Engineering} \\
\textit{University of South Australia} \\
% Adelaide, Australia  \\
yeewei.law@unisa.edu.au%, ORCID: 0000-0002-5665-0980
}
\and
\IEEEauthorblockN{3\textsuperscript{rd} Kee Siong~Ng}
\IEEEauthorblockA{\textit{Software Innovation Institute} \\
\textit{Australian National University}\\
keesiong.ng@anu.edu.au}
\and
\IEEEauthorblockN{4\textsuperscript{th} Shibei~Xue}
\IEEEauthorblockA{\textit{Department of Automation,} \\ 
\textit{Shanghai Jiao Tong University} \\ 
\textit{Key Laboratory of System Control} \\
\textit{and Information Processing,} \\
\textit{Ministry of Education of China} \\
 shbxue@sjtu.edu.cn
}
\and
\IEEEauthorblockN{5\textsuperscript{th} Jun Zhao, 6\textsuperscript{th} Mengmeng Yang}
\IEEEauthorblockA{\textit{School of Computer Science and Engineering} \\
\textit{Nanyang Technological University}\\
Singapore \\
junzhao, melody.yang@ntu.edu.sg }
\and
\IEEEauthorblockN{7\textsuperscript{th} Lei Liu}
\IEEEauthorblockA{\textit{Unicloud Engine Technology Co., Ltd} \\
China \\
liulei@unicde.com}
\thanks{Corresponding to: shbxue@sjtu.edu.cn. This work was supported in part by the National Natural Science Foundation of China under Grants 61873162 and in part by the Open Research Project of the State Key Laboratory of Industrial Control Technology, Zhejiang University, China (No.ICT20052).}
}

\maketitle
\begin{sloppypar}
\begin{abstract}
In privacy-preserving machine learning, individual parties are reluctant to share their sensitive training data due to privacy concerns. Even the trained model parameters or prediction can pose serious privacy leakage. To address these problems, we demonstrate a generally applicable \emph{Distributed Privacy-Preserving Prediction} (DPPP) framework, in which instead of sharing more sensitive data or model parameters, an untrusted aggregator combines only multiple models' predictions under provable privacy guarantee.
Our framework integrates two main techniques to guarantee individual privacy. First, we introduce the improved Binomial Mechanism and Discrete Gaussian Mechanism to achieve distributed differential privacy.
Second, we utilize homomorphic encryption to ensure that the aggregator learns nothing but the noisy aggregated prediction. Experimental results demonstrate that our framework has comparable performance to the non-private frameworks and delivers better results than the local differentially private framework and standalone framework.
\end{abstract}

\begin{IEEEkeywords}
Privacy-Preserving, prediction, distributed differential privacy, homomorphic encryption.
\end{IEEEkeywords}

\section{Introduction}
Many real-world applications would benefit from collaborative learning among multiple parties. This trend is motivated by the fact that the data owned by a single party may be very heterogeneous, resulting in an overfit model that might deliver inaccurate results when applied to other data. On the other hand, there is much demand to perform machine learning in a collaborative manner, since massive amount of data are often required to ensure sufficient computational power for test purpose. However, the increasing privacy and confidentiality concerns pose obstacles to collaboration~\cite{ohno2004protecting}. For the sake of privacy, most approaches cannot afford to share the trained models publicly. Even the prediction output by a trained model can reveal training data privacy through black-box attacks~\cite{shokri2017membership}. Therefore, neither training data, trained model nor model prediction should be directly shared. Meanwhile, these privacy concerns can be largely reduced if appropriate privacy-preserving schemes can be applied before the relevant statistic is released.

To mitigate privacy concerns in the distributed setting, instead of sharing more sensitive local data or model parameters of each party, we examine an alternative approach called \emph{distributed privacy-preserving prediction} (DPPP), which allows parties to keep full control of their local data, and only share local model predictions in a privacy-preserving manner. In our approach, each party first trains a local model based on local training data $\mat{D}_i$. To answer the prediction query for any test point $\vec{x}$, each party takes local model as the prediction function $f$ to predict $\vec{x}$, which returns the votes for all $c$ classes, \ie $f(\vec{x},\mat{D}_i)=\vec{y}_i$, where $\vec{y}_i \in \{0, 1\}^c$ is an one-hot prediction vector that sums up to 1. Considering individual privacy, we appropriately perturb model predictions before releasing them for aggregation. The primitive of this provably private noisy sum can be referred to~\cite{blum2005practical}. \textbf{Our contributions} include: 

\begin{itemize}
\item We formulate a distributed privacy-preserving prediction framework, named DPPP, which combines \emph{distributed differential privacy} (DDP) and homomorphic encryption to ensure individual privacy, maintain utility and provide aggregator obliviousness, \ie the aggregator learns nothing but the noisy aggregated prediction.

\item We explore the stability of \emph{Binomial Mechanism} (\emph{BM}) and \emph{Discrete Gaussian Mechanism} (\emph{DGM}) to guarantee ($\epsilon, \delta$)-differential privacy in the distributed setting, and formally provide a tightest bound to date for \emph{BM}. 

\item Extensive experiments demonstrate that DPPP delivers comparable performance to the \mbox{non-private} frameworks, and yields better results than the local differentially private and standalone frameworks.
\end{itemize}

\section{Preliminaries and Related Work}
\label{sec:PRELIMINARIES}

\subsection{Distributed Differential Privacy}

\begin{dfn}[$(\epsilon,\delta,\gamma)$-Distributed DP~\cite{shi2011privacy}] \label{def:ddp_database} Let $\epsilon>0$, $0\leq\delta<1$ and $0<\gamma<1$, We say that the mechanism \mech with randomness over the joint distribution of $\vec{r}:=(\vec{r_1},\cdots,\vec{r_N})$ preserves $(\epsilon,\delta,\gamma)$-distributed differential privacy (DDP) if the following conditions hold: For any neighbouring databases $D,D'\in\mathcal{D}^N$ that differ in one record, for any measurable subset $S\subseteq\mathcal{R}$, and for any subset $\bar{K}$ of at least $\gamma N$ honest parties, 
\begin{eqnarray*}
\label{eq:ddp_database}
\Pr\{\mech(D)\in S|\vec{r}_K\} &\leq& \exp(\epsilon) \cdot \Pr\{\mech(D')\in S|\vec{r}_K\} +
\delta.
\end{eqnarray*}
\end{dfn}

In the above definition, $\gamma$ is the fraction of uncompromised parties, and the probability is conditioned on the randomness $\vec{r}_K$ from compromised parties, \ie it ensures that if at least $\gamma N$ participants are honest and uncompromised, we will accumulate noise of a similar magnitude as that of the \emph{central differential privacy} (CDP)~\cite{dwork2014algorithmic}. For differentially-private aggregation of local statistics, DDP permits each party to randomise its local statistic to a lesser degree than \emph{local differential privacy} (LDP)~\cite{lyu2017privacy,yang2020local}. The most recent work introduced amplification by shuffling to lower the privacy cost of LDP algorithm when viewed in the central model of DP~\cite{erlingsson2019amplification,balle2019privacy}. LDP with shuffling yields a trust model which sits in between the central and local models for DP. 

\subsection{Homomorphic Encryption} 
Additive homomorphic encryption allows the calculation of the encrypted sum of plaintexts from their corresponding ciphertexts. 
Although there are several additive homomorphic cryptographic schemes, we use the threshold variant of Paillier scheme~\cite{damgaard2001generalisation} in our framework, because it not only allows additive homomorphic encryption, but also distributes decryption among parties.
In this cryptosystem, a party can encrypt the plaintext $m \in \mathbb{Z}_n$ with the public key $pk=(g,n)$ as

\begin{equation}\label{eq:enc}
c=E_{pk}(m)=g^mr^n \mod n^2,
\end{equation}
where $r \in \mathbb{Z}_n^*$ ($\mathbb{Z}_n^*$ denotes the multiplicative group of invertible elements of $\mathbb{Z}_n$) is selected randomly and privately by each party. The additive homomorphic property of this cryptosystem can be described as:

\begin{equation}\label{eq:add_hor}
\begin{split}
E_{pk}(m_1+m_2) &= E_{pk}(m_1) \cdot E_{pk}(m_2) \\
                &= g^{m_1+m_2}(r_1r_2)^n \mod n^2,
\end{split}
\end{equation}
%\begin{equation}\label{eq:mul_hor}
%E_{pk}(a \cdot m_1) = E_{pk}(m_1)^{a} = g^{am_1}r_1^{an} \mod n^2,
%\end{equation}
where $m_1$, $m_2$ are the plaintexts that need to be encrypted, and $r_1$, $r_2$ are the private randoms.

In this paper, $(N, t)$-threshold Paillier cryptosystem is adopted, in which the private key $sk$ is distributed among $N$ parties (denoted as $\{sk_1, sk_2, \cdots, sk_N\}$), thus no single party has the complete private key. For any ciphertext $c$, each party $i$ ($1 \leq i \leq N$) computes a partial decryption with its own partial private key $sk_i$ as:
\begin{equation}\label{eq:partial_dec}
c_i =c^{2N!sk_i}
\end{equation}
Then based on the combining algorithm in~\cite{damgaard2001generalisation}, at least $t$ partial decryptions are required to recover the plaintext $m$. 

\subsection{Multi-party Privacy}
In multi-party scenario where data is sourced from multiple parties and the server is \textbf{not trustworthy}, individual privacy has to be protected. 
Without homomorphic encryption, each party has to add sufficient noise to their statistics before sending them to a central server to ensure LDP~\cite{yang2020local}. Since the aggregation sums up individual noise shares, the aggregated noise might render the aggregation useless. To preserve privacy without significantly degrading utility, differential privacy can be made distributed by combining with cryptographic protocols, as evidenced in~\cite{rastogi2010differentially,acs2011have,shi2011privacy,lyu2017privacy}. 

More recently, Agarwal \etal~\cite{agarwal2018cpsgd} recalled the Binomial Mechanism~\cite{dwork2006our} and provided a similar bound as ours in Theorem~\ref{Theorem_BM}. However, they focus on the privacy of the gradients aggregated from clients in federated learning, which is different from the problem studied in this work. More importantly, they did not offer a complete scheme to protect against the untrusted aggregator. Note that they did not provide a complete proof. Instead, we provide a detailed proof for the tight bound.

Another recent work is \emph{Private Aggregation of Teacher Ensembles} (PATE) proposed by Papernot \etal~\cite{papernot2017semi}. PATE first trains an ensemble of teachers on disjoint subsets of private data. These teachers are then used to train a student model that can accurately mimic the ensemble. However, PATE assumes a trusted aggregator, who counts teacher votes assigned to each class, adds carefully calibrated Laplace noise to the resulting vote histogram, and outputs the class with the most noisy votes as the ensemble's prediction. Therefore, PATE fails to take into consideration against a potentially untrusted aggregator.

\section{Problem Definition}
Similar to Shi \etal~\cite{shi2011privacy}, we consider an untrusted aggregator who may have arbitrary auxiliary information. For example, the aggregator may collude with a set of compromised parties, who can \emph{reveal their data and noise values} to the aggregator as a form of auxiliary information. Our goal is to guarantee the privacy of each individual against an untrusted aggregator, even when the aggregator has arbitrary auxiliary information. To achieve this goal, we blind and encrypt the local statistics of parties before sharing them with the aggregator. Moreover, like most of the previous works~\cite{rastogi2010differentially,acs2011have,shi2011privacy}, to ensure the correctness and functionality of the system, we do not consider a malicious aggregator, as it may not be desirable in many practical settings, and are not in the commercial interest of collaborative service providers for prediction service. We remark that our privacy model is stronger than~\cite{shi2011privacy} in the sense that we allow for party failures. This assumption is often more realistic, as one or more parties may fail to upload their encrypted values or fail to respond. 

Moreover, we assume fewer than $1-\gamma=1/3$ of teachers are compromised -- the rest are assumed to be honest. Decryption can be done by the remaining 2/3 teachers using threshold Paillier. Since cryptographic protocol requires discrete inputs, instead of adding floating-point Gaussian noise to each individual's prediction, we leverage Binomial Mechanism (\emph{BM}) and Discrete Gaussian Mechanism (\emph{DGM}) to generate discrete Binomial noise and Gaussian noise. 

\section{DDP Mechanisms}
\label{sec:DDP} 
Approaches to DDP that implement an overall additive noise mechanism by summing the same mechanism run at each party (typically with less noise) necessitates mechanisms with stable distributions---to guarantee proper calibration of known end-to-end response distribution---and cryptography for hiding all but the final result from participating parties~\cite{shi2011privacy,dwork2006our,acs2011have,rastogi2010differentially}.
DDP utilizes this nice stability to permit each party to randomise its local statistic to a lesser degree ($\frac{\sigma}{\sqrt{n}}$) than would LDP ($\sigma$)~\cite{lyu2017privacy}. In summary, the goal of DDP is to both avoid the trust on any third party (trusted server in CDP~\cite{dwork2014algorithmic} and trusted shuffler in LDP with shuffling~\cite{yang2020local}), and achieve better utility than LDP. On the other hand, if the server colludes with all the parties except the victim, the privacy guarantee would downgrade to LDP. We next introduce two representative stable distributions, including Binomial distribution and discrete Gaussian distribution, which can be seamlessly combined with cryptographic techniques.

\subsection{Binomial Mechanism}
\emph{Binomial Mechanism} (\emph{BM}) is based on the Binomial distribution $B(n, p)$ parameterized by $n$, $p$, where $n\in \mathbb{N}$ is the number of tosses, and $p \in (0, 1)$ is the success probability. 
We now define \emph{BM} for the prediction function $f$ with an output space in $\{0, 1\}^c$, where $c$ is the number of classes, \ie for each data point, $f$ returns an one-hot prediction vector with a total of $c$ elements in $\{0, 1\}$ that sum up to 1.
Consider party $i$'s database $\mat{D}_i$ and prediction $f$: let $f(\vec{x},\mat{D}_i)=\vec{y}_i$ be the local prediction vector produced by party $i$ given data $\mat{D}_i$, and $y_i^j$ be the $j$-th element of $\vec{y}_i$, \ie prediction (vote status) for class $j$. If party $i$ assigns class $j$ to input $\vec{x}$, then $y_i^j=1$ while other elements are all 0's. The noisy vote count $\tau$ on each class $j$ equals $\tau= \textstyle\sum_{i=1}^N y_i^j + \textbf{noise}$, replacing the whole database $\mat{D}$ with $\mat{D}'$ differing only in one row changes the summation in each class by at most 1, \ie sensitivity=1. Bounding the ratio of probabilities that $\tau$ occurs with inputs $\mat{D}$ and $\mat{D}'$ amounts to bounding the ratio of probabilities that \textbf{noise} = $r^j$ and \textbf{noise} = $r^j$ + 1, for different possible ranges of values of $r^j$. Given prediction function $f(\vec{x},\mat{D}_i)=\vec{y}_i$, the goal of the \emph{BM} is to compute the noisy vote count for each class (each coordinate of the aggregated prediction): $\textstyle\sum_{i=1}^N y_i^j+r^j$, where $r^j=z-np$ is the random Binomial noise added to the vote count for each class $j$, and Binomial random variable $z \sim B(n, p)$ is independent for each class.

\begin{theorem}
\label{Theorem_BM}
(Tighter bound). 
For $p=1/2$, Binomial Mechanism is $(\epsilon,\delta)$-differentially private so long as the total number of tosses $n \geq 2\left(\frac{2+\epsilon}{\epsilon}\right)^2\ln\left(\frac{2}{\delta}\right)$. Note that this lower bound is tighter than $n \geq 64\ln\left(\frac{2}{\delta}\right)/\epsilon^2$ given in~\cite{dwork2006our}, but they both share the $\ln\left(\frac{2}{\delta}\right)$ term.
\end{theorem}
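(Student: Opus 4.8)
The plan is to reduce the $(\epsilon,\delta)$-DP claim to a tail bound on the discrete Binomial noise and then optimise the constant. Because the vote count has sensitivity $1$ and the additive noise is $r^j = z-np$ with $z\sim B(n,1/2)$ drawn independently per coordinate, for any neighbouring databases $D,D'$ the output law of $\tau$ is merely a unit shift of the law of $z$. Hence it suffices to exhibit a ``typical'' set $G\subseteq\{0,\dots,n\}$ with $\Pr[z\notin G]\le\delta$ such that $\Pr[z=k]\le e^{\epsilon}\Pr[z=k\pm 1]$ for every $k\in G$; the usual $(\epsilon,\delta)$ packaging---split any measurable event $S$ into the part on which the realised noise lands in $G$ and its complement, bounding the former by $e^{\epsilon}$ times the corresponding probability under $D'$ and the latter by $\Pr[z\notin G]\le\delta$---then closes the argument. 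The conditional (DDP) version is the same statement applied to the residual noise left after the compromised parties reveal theirs, which is again $B(n,1/2)$-distributed with $n$ now the aggregate honest-party toss count.

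First I would write the ratio exactly: for $z\sim B(n,1/2)$, $\Pr[z=k]/\Pr[z=k+1]=\binom{n}{k}/\binom{n}{k+1}=(k+1)/(n-k)$, and symmetrically $\Pr[z=k]/\Pr[z=k-1]=(n-k+1)/k$. Both are monotone in $k$, so over a symmetric window $G=\{k:\abs{k-n/2}\le t\}$ the worst case of either ratio is attained at the endpoints $k=n/2\pm t$ and equals $M:=(n/2+t+1)/(n/2-t)=1+(2t+1)/(n/2-t)$. Next I would fix the radius $t$ by concentration: $z$ is a sum of $n$ i.i.d.\ $\mathrm{Bernoulli}(1/2)$ variables, so Hoeffding's inequality gives $\Pr[\abs{z-n/2}>t]\le 2\exp(-2t^2/n)$, and the choice $t=\sqrt{(n/2)\ln(2/\delta)}$ makes $\Pr[z\notin G]\le\delta$.

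It then remains to force $\ln M\le\epsilon$ under the stated hypothesis on $n$. Using $\ln(1+x)\le x$, it is enough that $(2t+1)/(n/2-t)\le\epsilon$, i.e.\ $(2+\epsilon)\,t\le \epsilon n/2-1$; substituting $t=\sqrt{(n/2)\ln(2/\delta)}$ and squaring (after absorbing the $O(1)$ term), this is implied by $(2+\epsilon)^2(n/2)\ln(2/\delta)\le\epsilon^2(n/2)^2$, which rearranges exactly to $n\ge 2\bigl((2+\epsilon)/\epsilon\bigr)^2\ln(2/\delta)$. The improvement over the $64\ln(2/\delta)/\epsilon^2$ bound of~\cite{dwork2006our} is then immediate, since $2(2+\epsilon)^2\le 64$ throughout the relevant privacy regime (e.g.\ $2(2+\epsilon)^2\le 18$ for $\epsilon\le1$).

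The step I expect to be the main obstacle is extracting the tight constant in the last paragraph: essentially all the slack in the chain lives in the two estimates $\ln(1+x)\le x$ and the dropping of the additive $1$, so one must check carefully that Hoeffding (rather than a looser multiplicative Chernoff form) combined with these is exactly what produces the factor $2\bigl((2+\epsilon)/\epsilon\bigr)^2$ and not a larger one, and that the endpoint/monotonicity reduction in the second paragraph genuinely captures the worst neighbouring pair in both shift directions. A minor point is that $k\pm1$ may fall outside $\{0,\dots,n\}$; this is harmless because such $k$ lie far outside $G$ for any $n$ meeting the hypothesis, so those terms are already absorbed into the $\delta$ slack.
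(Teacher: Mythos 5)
Your proposal is correct and follows essentially the same route as the paper's proof: bound the exact binomial probability ratio $\binom{n}{k}/\binom{n}{k+1}$, linearize via $1+\epsilon\le e^\epsilon$ to get the admissible noise window $\abs{r^j}<\epsilon n/(4+2\epsilon)$, and control the mass outside that window with the Hoeffding/Chernoff tail bound for $B(n,1/2)$, which yields $n\ge 2\left(\frac{2+\epsilon}{\epsilon}\right)^2\ln\left(\frac{2}{\delta}\right)$. Your treatment is slightly more careful than the paper's on two minor points (checking both shift directions via monotonicity of the ratio, and making the good-set/bad-set packaging of $(\epsilon,\delta)$-DP explicit), but the decomposition, the key estimates, and the resulting constant are identical, including the shared step of absorbing the additive $O(1)$ term.
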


\begin{proof}
For Binomial distribution with $p=\frac{1}{2}$, $r=z-\frac{n}{2}$ is termed as Binomial noise, where $z$ is a Binomial random variable sampled from $B(n,1/2)$ with mean $\frac{n}{2}$ and success probability $\frac{1}{2}$ by performing coin flipping. To investigate how we can size Binomial noise, suppose $r^j$ is the random Binomial noise added to the vote count for each class $j$, then
\[ \tau = \textstyle\sum_{i=1}^N y_i^j + r^j, \qquad \tau' = \textstyle\sum_{i=1}^N y_i^j + r^j + 1. \]

For the Binomial random variable with bias $1/2$, whose mass at $\frac{n}{2} + r^j$ is
\[\Pr\left(\frac{n}{2}+r^j\right)=\binom{n}{\frac{n}{2}+r^j}\left(\frac{1}{2}\right)^n, \]
$\epsilon$-differential privacy requires that
\[\begin{split}
&\Pr\left\{\frac{n}{2} + r^j\right\} \leq e^\epsilon\Pr\left\{\frac{n}{2} + r^j + 1\right\} \\
	\implies
	&\binom{n}{\frac{n}{2}+r^j}\left(\frac{1}{2}\right)^n \leq e^\epsilon\binom{n}{\frac{n}{2}+r^j+1}\left(\frac{1}{2}\right)^n \\
	\implies
	&\frac{n}{2}+r^j+1 \leq e^\epsilon\left(\frac{n}{2}-r^j\right).
\end{split}\]

To express $r^j$ in terms of $\epsilon$ in an algebraically simple way, we use the inequality $1+\epsilon \leq e^\epsilon$:
\begin{equation*}
\begin{split}%\label{eq:noise_bound}
	\frac{n}{2}+r^j+1 \leq (1+\epsilon)\left(\frac{n}{2}-r^j\right) \implies	r^j \leq \frac{\epsilon n - 2}{4 + 2\epsilon} < \frac{\epsilon n}{4 + 2\epsilon}.
\end{split}
\end{equation*}
Therefore, the Binomial random variable $n/2+r^j$ can \emph{apparently} achieve $\epsilon$-differential privacy, as long as $r^j<\frac{\epsilon n}{4 + 2\epsilon}$. Note:

\begin{itemize}
	\item $\frac{n}{2}+\frac{\epsilon n}{4 + 2\epsilon} = \left(1 + \frac{\epsilon}{2+\epsilon}\right)\frac{n}{2}$. This equation will be used in Eq.~\eqref{eq:binom-Chernoff} later.
	\item This noise upper bound is tighter than Dwork \etal's \cite{dwork2006our}. 
\end{itemize}
However, note that when $r^j$ exceeds this upper bound, $\epsilon$-differential privacy will be violated. Hence, we turn to the relaxed $(\epsilon,\delta)$-DP, which requires that
\[\begin{split}
	\Pr\left\{\frac{n}{2}+r^j\leq \frac{n}{2}-\xi\text{ or } \frac{n}{2}+r^j\geq \frac{n}{2}+\xi\right\} \leq \delta,
\end{split}\]
where $\xi\definedas\frac{\epsilon n}{4 + 2\epsilon}$. Since the Binomial distribution is symmetrical about its mean $n/2$, the inequality above is equivalent to 
\[
	\Pr\cbr{\frac{n}{2}+r^j \geq \frac{n}{2}+\xi} \leq \frac{\delta}{2},
\]
According to Chernoff bound theorem, for any $X \sim$ Binomial$(n, 1/2)$, and $0 \leq t \leq \sqrt{n}$,
\[ \Pr\left\{X \geq \frac{n}{2}+t\frac{\sqrt{n}}{2}\right\} \leq e^{-t^2/2}. \]

Rewriting $\xi=\frac{\sqrt{n}}{2}\cdot\frac{\sqrt{n}\epsilon}{2+\epsilon}$, and replacing $t$ with $\frac{\sqrt{n}\epsilon}{2+\epsilon}$, $X$ with $n/2+r^j$, the requirement for $(\epsilon,\delta)$-DP reduces to:
\begin{equation}\begin{split}\label{eq:binom-Chernoff}
	&e^{-t^2/2} \leq \frac{\delta}{2} \implies n \geq 2\left(\frac{2+\epsilon}{\epsilon}\right)^2\ln\left(\frac{2}{\delta}\right).
\end{split}\end{equation}
Therefore, Theorem~\ref{Theorem_BM} follows.
\end{proof}

It should be noted that when $\epsilon \ll 2$, $n \geq \frac{8}{\epsilon^2}\ln\left(\frac{2}{\delta}\right)$, providing a constant-factor improvement over the Binomial Mechanism in~\cite{dwork2006our}. Unlike Laplace or Gaussian Mechanism used in the original PATE~\cite{papernot2017semi,papernot2018scalable}, Binomial Mechanism avoids floating-point representation issues and enables efficient transmission, thus it can be seamlessly used with a cryptosystem. Furthermore, the stability of Binomial distribution facilitates noise distribution among multiple teachers.

\subsection{Discrete Gaussian Mechanism}
Discrete Gaussian (DG) Mechanism belongs to the category of Gaussian Mechanism~\cite{dwork2014algorithmic}, hence it satisfies all the properties of Gaussian Mechanism. However, in discrete Gaussian Mechanism, discrete Gaussian noise is sampled from a discrete Gaussian by the following Definition~\ref{def:Discrete-Gaussian} 

 \begin{definition}[Discrete Gaussian] \label{def:Discrete-Gaussian}
 The pmf of the discrete Gaussian is proportional to the pdf of its continuous version. For any $x \in \mathbb{Z}$, the pmf of discrete Gaussian is defined as:
 \begin{equation}\begin{split}
 P(X=x) \propto \displaystyle f_{X}(x) 
 &= \mathcal {N}(x;\mu _{X},\sigma_{X}^{2})\nonumber \\
 &= \frac {1}{{\sqrt {2\pi }}\sigma _{X}}\exp\left(-{\frac {(x-\mu
 _{X})^{2}}{2\sigma _{X}^{2}}}\right).\label{equ:DG}
 \end{split}\end{equation}
 \end{definition}

 \begin{corollary}[Stability of Discrete Gaussian]
 \label{corollary:stability_DG}
 The sum of independent discrete Gaussian distributed random variables still follows a discrete Gaussian distribution.
 \end{corollary}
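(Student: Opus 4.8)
The plan is to reduce to the two-summand case and argue directly from the proportionality characterisation in Definition~\ref{def:Discrete-Gaussian}, then lift to arbitrary finite sums by induction. Concretely, it suffices to show that if $X_1$ and $X_2$ are independent with $X_k$ a discrete Gaussian of parameters $(\mu_k,\sigma_k^2)$, then $Z=X_1+X_2$ has a pmf proportional over $\mathbb{Z}$ to the continuous density $\mathcal{N}(z;\mu_1+\mu_2,\sigma_1^2+\sigma_2^2)$; the general case then follows because means and variances add under convolution. First I would write each pmf explicitly as $P(X_k=x)=\frac{1}{C_k}\exp(-(x-\mu_k)^2/(2\sigma_k^2))$ for $x\in\mathbb{Z}$, with $C_k=\sum_{j\in\mathbb{Z}}\exp(-(j-\mu_k)^2/(2\sigma_k^2))$ the discrete partition function, so that each law is exactly the integer restriction of a Gaussian density up to scaling.

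Next I would form the discrete convolution $P(Z=z)=\sum_{x\in\mathbb{Z}}P(X_1=x)P(X_2=z-x)$ and, for each fixed $z$, complete the square in the summation index $x$. The exponent $(x-\mu_1)^2/(2\sigma_1^2)+(z-x-\mu_2)^2/(2\sigma_2^2)$ is quadratic in $x$ and rearranges as $(x-m(z))^2/(2\sigma^2)+(z-\mu_1-\mu_2)^2/(2(\sigma_1^2+\sigma_2^2))$, where $\sigma^2=\sigma_1^2\sigma_2^2/(\sigma_1^2+\sigma_2^2)$ and $m(z)$ is affine in $z$. Pulling the $z$-only term out of the sum yields $P(Z=z)=\frac{1}{C_1C_2}\exp(-(z-\mu_1-\mu_2)^2/(2(\sigma_1^2+\sigma_2^2)))\,S(z)$, where $S(z)=\sum_{x\in\mathbb{Z}}\exp(-(x-m(z))^2/(2\sigma^2))$. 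The extracted prefactor is precisely the continuous Gaussian density with the claimed summed mean and variance, so the entire statement reduces to understanding the residual lattice sum $S(z)$.

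The main obstacle is to establish that $S(z)$ does not depend on $z$, since $S(z)$ sees $z$ only through the offset of the centre $m(z)$ relative to the integer lattice; this is exactly where the theta-function structure of the discrete Gaussian must be controlled. I would apply the Poisson summation formula to $S(z)$, converting the lattice sum into a sum over the dual lattice whose $k=0$ term is the $z$-independent constant $\sqrt{2\pi}\,\sigma$ and whose remaining terms are each damped by $\exp(-2\pi^2\sigma^2k^2)$. Showing that $S(z)$ is constant, so that all $z$-dependence is carried by the extracted Gaussian prefactor, is the crux of the argument; once it is in hand we conclude $P(Z=z)\propto\mathcal{N}(z;\mu_1+\mu_2,\sigma_1^2+\sigma_2^2)$ on $\mathbb{Z}$, which by Definition~\ref{def:Discrete-Gaussian} is a discrete Gaussian. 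Normalising $P(Z=\cdot)$ by its own partition function and iterating over the summands via the induction then closes the argument.
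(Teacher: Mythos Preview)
Your argument hinges on the claim that the residual lattice sum $S(z)=\sum_{x\in\mathbb{Z}}\exp\bigl(-(x-m(z))^2/(2\sigma^2)\bigr)$ is independent of $z$, but this is precisely where the approach breaks down. The Poisson summation you invoke gives
\[
S(z)=\sqrt{2\pi}\,\sigma\sum_{k\in\mathbb{Z}}\exp(-2\pi^2\sigma^2 k^2)\,e^{2\pi i k\, m(z)},
\]
and the $k\neq 0$ harmonics carry genuine $z$-dependence through the phase $e^{2\pi i k\, m(z)}$. Since $m(z)=\sigma_1^2(z-\mu_2)/(\sigma_1^2+\sigma_2^2)+\text{const}$ is generically not integer-valued on $\mathbb{Z}$, these phases do not all equal $1$, and $S(z)$ oscillates (periodically in $m$, with period $1$) rather than being constant. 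Equivalently, as a function of the offset $m$ the Jacobi theta sum $\sum_{x\in\mathbb{Z}}e^{-(x-m)^2/(2\sigma^2)}$ is $1$-periodic but nonconstant; for small $\sigma$ its values at $m=0$ and $m=1/2$ differ dramatically.

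This is not a cosmetic gap: it reflects that the stated corollary is, as an \emph{exact} identity, false. The convolution of two discrete Gaussians on $\mathbb{Z}$ is in general only \emph{approximately} discrete Gaussian, with the discrepancy controlled by the exponentially small tail $\sum_{k\neq 0}e^{-2\pi^2\sigma^2 k^2}$; see, e.g., the analysis in Canonne--Kamath--Steinke on the discrete Gaussian for DP. The paper itself does not supply a proof here but defers to an external thesis, so there is no in-paper argument to compare against; however, any correct version must either (i) restate the result as an approximation with explicit total-variation or R\'enyi bounds, or (ii) restrict to a regime (e.g., commensurate variances on a sublattice) where the offset $m(z)$ lands on $\mathbb{Z}$ for every integer $z$. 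Your completing-the-square decomposition is the right scaffold for proving such an approximate stability statement, but the ``$S(z)$ constant'' step must be replaced by a quantitative bound on the oscillation.
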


Discrete Gaussian is a stable distribution as its continuous version (a sum of discrete Gaussian r.v.'s is still discrete Gaussian). The detailed proof for the stability of discrete Gaussian distribution can be referred to the Corollary A.1. in~\cite{lyu2018privacy_thesis}.
Like Binomial distribution, the stability of discrete Gaussian is ideal for realising DDP: analysis of overall privacy is made possible by analysing individuals, while also supporting fault tolerance if some individuals are compromised. Therefore, we utilize the stability of discrete Gaussian to distribute the noise generation among parties. To determine how much optimal noise should be added, we adopt an analytic Gaussian Mechanism as in Theorem~\ref{theorem_Analytic_Gaussian}, which eliminates the constraint $\epsilon <1$ in the classical Gaussian Mechanism and removes at least a third of the variance of the noise compared to the classical Gaussian Mechanism, thus delivering better utility~\cite{balle2018improving}.

\begin{theorem}
\label{theorem_Analytic_Gaussian}
Let $f: \mathbb{X} \rightarrow \mathcal{R}^d$ be a function with global $L_2$ sensitivity $\Delta$. For any $\epsilon \geq 0$ and $\delta \in [0,1]$, the Analytic Gaussian Mechanism $M(x) = f(x) + Z$ with $Z \sim N (0,\sigma^2 I)$ is $(\epsilon, \delta)$-DP if and only if

\begin{equation}\label{eq:Analytic_Gaussian}
\Phi \left(\frac{\Delta}{2\sigma}-\frac{\epsilon\sigma}{\Delta}\right) - e^\epsilon \Phi \left(-\frac{\Delta}{2\sigma}-\frac{\epsilon\sigma}{\Delta}\right) \leq \delta.
\end{equation}
\end{theorem}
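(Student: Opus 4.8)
The plan is to reduce the $d$-dimensional claim to a one-dimensional Gaussian comparison with a closed-form $(\epsilon,\delta)$ trade-off, following the analytic calibration of Balle and Wang~\cite{balle2018improving}. The starting point is the hockey-stick / privacy-loss characterisation of differential privacy: $M$ is $(\epsilon,\delta)$-DP if and only if, for every pair of neighbours $x,x'$,
\[
\delta_{x,x'}(\epsilon)\definedas\Pr_{o\sim M(x)}[\,L_{x/x'}(o)>\epsilon\,]-e^{\epsilon}\,\Pr_{o\sim M(x')}[\,L_{x/x'}(o)>\epsilon\,]\;\le\;\delta,
\]
where $L_{x/x'}(o)=\log\bigl(p_{M(x)}(o)/p_{M(x')}(o)\bigr)$ is the privacy loss. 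So it suffices to compute $\sup_{x\sim x'}\delta_{x,x'}(\epsilon)$ for $M(x)=f(x)+Z$ with $Z\sim N(0,\sigma^2 I)$.

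Next I would exploit the spherical symmetry of $Z$. Writing $v=f(x)-f(x')$ and $\eta=\lVert v\rVert_2\le\Delta$, an orthogonal change of coordinates aligning $v$ with one axis shows that $L_{x/x'}$ depends on the output only through its projection onto $v$, which is $N(\eta,\sigma^2)$ under $M(x)$ and $N(0,\sigma^2)$ under $M(x')$. Hence $\delta_{x,x'}(\epsilon)$ equals the corresponding quantity $g(\eta)$ for a scalar Gaussian mechanism with mean gap $\eta$, and a monotonicity lemma stating that $g$ is nondecreasing on $[0,\infty)$ collapses the supremum over neighbours to the single case $\eta=\Delta$.

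For the scalar Gaussian with mean gap $\Delta$ I would then compute $L_{x/x'}(o)=\tfrac{\Delta}{\sigma^{2}}\bigl(o-\tfrac{\Delta}{2}\bigr)$, so that $\{L_{x/x'}(o)>\epsilon\}=\{o>\tfrac{\Delta}{2}+\tfrac{\epsilon\sigma^{2}}{\Delta}\}$; its probability is $\Phi\!\bigl(\tfrac{\Delta}{2\sigma}-\tfrac{\epsilon\sigma}{\Delta}\bigr)$ under $N(\Delta,\sigma^2)$ and $\Phi\!\bigl(-\tfrac{\Delta}{2\sigma}-\tfrac{\epsilon\sigma}{\Delta}\bigr)$ under $N(0,\sigma^2)$. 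Substituting into $\delta_{x,x'}(\epsilon)$ reproduces the left-hand side of~\eqref{eq:Analytic_Gaussian} exactly; since the same computation with $x$ and $x'$ interchanged yields the identical expression, the displayed inequality is both necessary and sufficient.

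The step I expect to be the genuine obstacle is the monotonicity of $g(\eta)$ in the mean gap, i.e. confirming that the worst-case neighbours are precisely those with $\lVert f(x)-f(x')\rVert_2=\Delta$ rather than some pair with a smaller gap; this needs either a direct derivative estimate on $g$ or a stochastic-dominance argument on the privacy-loss variable. Everything else is a routine Gaussian-tail computation once the privacy-loss characterisation is in place.
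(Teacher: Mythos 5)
The paper does not prove this theorem at all: it is imported verbatim from Balle and Wang~\cite{balle2018improving} as a known calibration result, so there is no in-paper proof to compare against. Your sketch is essentially a correct reconstruction of the argument in that reference. The reduction via rotational invariance to a one-dimensional Gaussian pair $N(\eta,\sigma^2)$ versus $N(0,\sigma^2)$, the identification of the optimal rejection set with $\{L_{x/x'}>\epsilon\}$ in the hockey-stick divergence, and the tail computation giving $\Phi\bigl(\tfrac{\Delta}{2\sigma}-\tfrac{\epsilon\sigma}{\Delta}\bigr)-e^{\epsilon}\Phi\bigl(-\tfrac{\Delta}{2\sigma}-\tfrac{\epsilon\sigma}{\Delta}\bigr)$ are all right, and you correctly observe that the Gaussian case is symmetric under swapping $x$ and $x'$, so a single inequality suffices. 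The one step you flag as a potential obstacle, monotonicity of $g(\eta)$ on $[0,\Delta]$, is a genuine lemma rather than a triviality, but it does go through: the privacy loss of the scalar Gaussian pair is itself $N(\eta^2/2\sigma^2,\eta^2/\sigma^2)$, so $g(\eta)=\Phi\bigl(\tfrac{\eta}{2\sigma}-\tfrac{\epsilon\sigma}{\eta}\bigr)-e^{\epsilon}\Phi\bigl(-\tfrac{\eta}{2\sigma}-\tfrac{\epsilon\sigma}{\eta}\bigr)$, and a direct derivative computation (the two $\Phi'$ terms combine because $e^{\epsilon}\varphi\bigl(-\tfrac{\eta}{2\sigma}-\tfrac{\epsilon\sigma}{\eta}\bigr)=\varphi\bigl(\tfrac{\eta}{2\sigma}-\tfrac{\epsilon\sigma}{\eta}\bigr)$) shows it is nondecreasing in $\eta$; this is exactly how the cited paper closes the argument. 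So your proposal would, once that lemma is written out, constitute a complete proof of a statement the present paper only cites.
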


In order to obtain $(\epsilon, \delta)$-DP for a function $f$ with global $L_2$ sensitivity $\Delta$, it is enough to add Gaussian noise with variance $\sigma^2$ satisfying Eq.~\ref{eq:Analytic_Gaussian}. We therefore distribute the discrete Gaussian noise at a level of $\sigma^2$ among parties.

\section{Distributed Privacy-preserving Prediction}
\label{sec:DPPP}
In this work, we study the applicability of DPPP to horizontally partitioned databases, where multiple parties each owns different groups of individuals with similar features. For example, different hospitals, each holding the same kind of information for different patients, can collaboratively perform statistical analyses of the union of their patients, while ensuring privacy for each patient. Consequently, hereafter, instead of training a centralized model to solve the task associated with the whole database $\mat{D} \in \mathbb{R}^{|D|\times d}, \vec{Y} \in \mathbb{R}^{|D|}$, the whole database $\mat{D}$ is partitioned into $N$ disjoint subsets, $\{\mat{D}_1,\mat{D}_2,...\mat{D}_N\}$ that are held by $N$ parties who are unwilling to make their training data, model parameters or model predictions public or share them with others. Here $|D|$ refers to the total number of training records in $\mat{D}$, $\mat{D}_i$ and $\vec{Y}_i$ represent party $i$'s training data and labels respectively. Individual models are trained separately on each subset $\{\mat{D}_i, \vec{Y}_i\}$.

In the case of prediction for any test point $\vec{x}$, each party applies $f(\vec{x},\mat{D}_i)=\vec{y}_i$, \ie a prediction function that returns the prediction $\vec{y}_i \in \{0, 1\}^c$ with the position of value 1 corresponding to the predicted class. The aggregate of multiple predictions becomes: $\sum_i f(\vec{x},\mat{D}_i) = \sum_i \vec{y}_i$, where $\vec{y}_i$ is the one-hot prediction vector produced by teacher $i$'s local model built on individual training data $\mat{D}_i$, hence the aggregate for each class equals to the sum of $N$ scalars. In \emph{distributed privacy-preserving prediction} (DPPP), the goal is to privately release the aggregated prediction, \ie noisy sum: ${\textstyle\sum}_i(\vec{y}_i+\vec{r}_i)$, where $\vec{r}_i=(r_i^1,\cdots,r_i^c)$, $r_i^j=z-mp$, $z \sim B(m,p), m=n/N$ and $p=1/2$ for BM, or $r_i^j \sim DG(0,\sigma/\sqrt{N})$ where $\sigma$ satisfies Eq.~\ref{eq:Analytic_Gaussian} for DGM. Our framework aims to deliver the differentially private aggregated prediction that is close to the desired aggregate $\sum_i \vec{y}_i$, while providing privacy guarantee. 

%%%%%%%%%%%%%%%%%%%%%%%%%%%%%%%%%%%%%
\begin{figure}[!htp]
\centering
\includegraphics[scale=0.2]{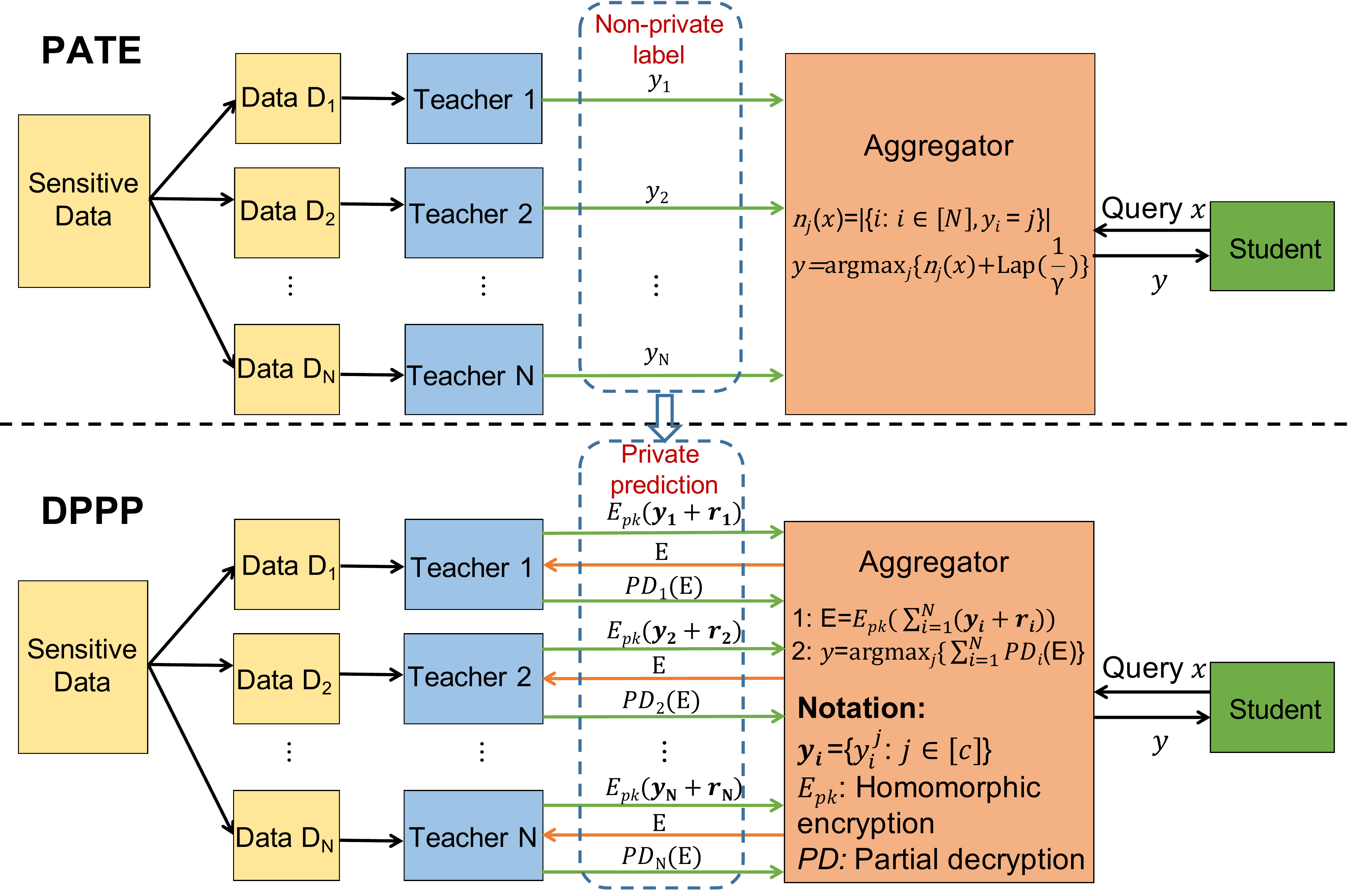}
\caption{Overview of the PATE and DPPP.}
\label{fig:case1}
\end{figure}
%%%%%%%%%%%%%%%%%%%%%%%%%%%%%%%%%%%%%%%
To realize this goal, we take inspiration from PATE framework. As illustrated in Fig.~\ref{fig:case1}, both PATE and DPPP first train an ensemble of teachers on disjoint subsets of the sensitive data, then the aggregator aggregates their outputs. However, the main difference between PATE and our DPPP is that the aggregator is trusted in PATE, so teachers directly share their non-private labels with the aggregator, while we improve PATE by eliminating the trust on the aggregator in DPPP. In particular, the total amount of noise required to guarantee $(\epsilon,\delta)$-DP of the aggregated prediction is distributed among teachers by using DDP: each teacher $i$ adds a share of noise $\vec{r}_i$ to its local prediction $\vec{y}_i$. The noise shares are chosen such that $\sum_{i=1}^N \vec{r}_i=\vec{r}$ is sufficient to ensure $(\epsilon,\delta)$-DP of the aggregated prediction, but $\vec{r}_i$ alone is not sufficient to ensure $(\epsilon,\delta)$-DP of local prediction, thus $\vec{y}_i+\vec{r}_i$ cannot be directly released to the aggregator. Therefore, it necessitates the help of cryptographic techniques to maintain utility and ensure aggregator obliviousness, as evidenced in~\cite{rastogi2010differentially,acs2011have,shi2011privacy}. Hence, we combine DDP with a distributed cryptosystem to achieve this goal. 
As shown in Fig.~\ref{fig:case1}, in DPPP, each teacher $i$ first computes the encryption of $\vec{y}_i+\vec{r}_i$ before sending it to the aggregator. Due to the additive homomorphic property in Eq.~\eqref{eq:add_hor}, the aggregator can compute the encryption of the noisy sum of all the local predictions as E $= E(\textstyle\sum_{i=1}^N (\vec{y}_i+\vec{r}_i))$.
This aggregated encryption E is then sent back to all teachers to compute partial decryptions. Finally, the partial decryptions $PD_i$(E) are forwarded to the aggregator, who combines all the partial decryptions to get the final decryption, \ie the aggregated prediction. 

\begin{theorem}
\label{theorem_aggregate}
Suppose $D,D'$ are neighboring databases that differ by one record, then each coordinate of the aggregated prediction, as given by $\textstyle\sum_{i=1}^N y_i^j$, differ by at most $1$.
Let $\mathcal{M}$ be the mechanism that reports $\textstyle\arg\max_j \left\{\sum_{i=1}^N \left(y_i^j+r_i^j\right)\right\}$. Then $\mathcal{M}$ satisfies $(\epsilon,\delta)$-DP, provided $r_i^j=B(m,1/2)-m/2, m=n/N, n \geq 2\left(\frac{2+\epsilon}{\epsilon}\right)^2\ln\left(\frac{2}{\delta}\right)$ or $r_i^j \sim DG(0,\sigma/\sqrt{N})$ where $\sigma$ satisfies Eq.~\ref{eq:Analytic_Gaussian}.
\end{theorem}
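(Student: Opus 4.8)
The plan is to reduce the distributed mechanism to a \emph{centralized} one by exploiting the stability of the two noise families, and then to invoke Theorems~\ref{Theorem_BM} and~\ref{theorem_Analytic_Gaussian} together with the post-processing invariance of differential privacy; note that, because of the additive homomorphic encryption and threshold decryption, the only quantity ever revealed is the aggregated noisy sum $\sum_{i=1}^N(\vec{y}_i+\vec{r}_i)$, so it suffices to analyze the differential privacy of $\mathcal{M}$ as stated. The first step is the sensitivity claim: since the subsets $\mat{D}_1,\dots,\mat{D}_N$ are disjoint, replacing one record of $\mat{D}$ affects the training set, and hence the one-hot vote $\vec{y}_i$ on the query $\vec{x}$, of exactly one teacher $i$; consequently, for every class $j$ the aggregate count $\sum_{i=1}^N y_i^j$ changes by at most $1$, which is the stated bound.

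The crux is the second step, collapsing the $N$ per-teacher noise shares into a single centralized draw. For the Binomial case each teacher adds $r_i^j=B(m,1/2)-m/2$ with $m=n/N$, and since a sum of $N$ independent $B(m,1/2)$ variables is $B(Nm,1/2)=B(n,1/2)$, the total noise $\sum_{i=1}^N r_i^j$ has the same distribution as $B(n,1/2)-n/2$, i.e.\ exactly the centralized Binomial noise of Theorem~\ref{Theorem_BM}. For the discrete Gaussian case, Corollary~\ref{corollary:stability_DG} gives $\sum_{i=1}^N r_i^j\sim DG(0,\sigma)$ out of the $N$ shares $DG(0,\sigma/\sqrt N)$, i.e.\ the centralized analytic Gaussian noise calibrated through Eq.~\eqref{eq:Analytic_Gaussian}. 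Hence the decrypted vector $\big(\sum_{i=1}^N(y_i^j+r_i^j)\big)_{j=1}^c$ is distributed exactly as the output of the corresponding centralized mechanism applied to a vote histogram of sensitivity $1$; by Theorem~\ref{Theorem_BM} (respectively Theorem~\ref{theorem_Analytic_Gaussian}, using that the discrete Gaussian inherits the Gaussian Mechanism's guarantee) this release is $(\epsilon,\delta)$-DP precisely when $n\geq 2\big(\frac{2+\epsilon}{\epsilon}\big)^2\ln\big(\frac{2}{\delta}\big)$ (respectively when $\sigma$ satisfies Eq.~\eqref{eq:Analytic_Gaussian}). Finally, $\mathcal{M}$ outputs $\arg\max_j\{\sum_{i=1}^N(y_i^j+r_i^j)\}$, a fixed deterministic function of this noisy histogram, so by the post-processing property $\mathcal{M}$ is $(\epsilon,\delta)$-DP, which is the theorem.

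The main obstacle I anticipate is making the second step fully rigorous: the $N$-fold convolution of the per-teacher shares has to be \emph{exactly} the centralized noise distribution---immediate for the Binomial family, but for discrete Gaussians genuinely resting on Corollary~\ref{corollary:stability_DG}---and the sensitivity used to calibrate that centralized noise (per-coordinate $1$ for BM, $L_2$ sensitivity $\Delta$ for DGM via Theorem~\ref{theorem_Analytic_Gaussian}) must be matched to the one-record change. A secondary subtlety, if one wished to release the entire noisy histogram rather than only its argmax, is that a single record can move two coordinates at once; working with the argmax output of $\mathcal{M}$ (equivalently, a report-noisy-max style argument) sidesteps this.
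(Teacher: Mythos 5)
Your proposal is correct and follows essentially the same route the paper takes (the paper only sketches this argument informally rather than giving a proof block): collapse the per-teacher noise shares into one centralized draw via the stability of the Binomial and discrete Gaussian families, invoke Theorem~\ref{Theorem_BM} (resp.\ Theorem~\ref{theorem_Analytic_Gaussian} together with Corollary~\ref{corollary:stability_DG}), and treat the $\arg\max$ as post-processing of the noisy histogram. If anything you are more careful than the paper, which never addresses the fact that a one-record change moves \emph{two} coordinates of the vote histogram at once; your closing remark that the $\arg\max$ output (via a report-noisy-max style argument) or a sensitivity adjustment is needed to handle this is a genuine subtlety the paper leaves unstated.
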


As stated in Theorem~\ref{theorem_aggregate}, the aggregated prediction is $(\epsilon,\delta)$ differentially private, the privacy guarantee stems from the aggregation of teacher ensemble. If teacher $i$ assigns class $j$ to input $\vec{x}$, then $y_i^j$ equals 1 while all the other elements are all 0's. For any test point $\vec{x}$, independent noise share $r_i^j$ is added to each teacher's prediction for each class $y_i^j$. Hence, the aggregated prediction for class $j$ is equivalent to $\textstyle\sum_{i=1}^N \left(y_i^j+r_i^j\right)$, and the predicted class equals:
\begin{equation}\label{eq:BM}
\arg\max_j \left\{\textstyle\sum_{i=1}^N \left(y_i^j+r_i^j\right)\right\},
\end{equation}
where $r_i^j=z-m/2, z \sim B(m,1/2), m=n/N$, here $n$ is the total number of tosses in Theorem~\ref{Theorem_BM} for BM, or $r_i^j \sim DG(0,\sigma/\sqrt{N})$ where $\sigma$ satisfies Eq.~\ref{eq:Analytic_Gaussian} for DGM. When there is a strong consensus among $N$ teachers, the label they almost all agree on (maximum of the aggregated prediction) does not depend on any particular teacher. Overall, DPPP provides a differentially private API: the privacy cost of each aggregated prediction made by the teacher ensemble is known. Semi-supervised learning can be further used to train a student model given a limited set of labels from the aggregation mechanisms~\cite{papernot2017semi,papernot2018scalable}.

%%%%%%%%%%%%%%%%%%%%%%%%%%%%%%%%%%%%%%%
\section{Distributed Cryptosystem}
\label{sec:cryptosystem}
As part of DPPP, based on the threshold Paillier cryptosystem~\cite{cramer2001multiparty}, we design a secure aggregation protocol in Protocol~\ref{protocol:sec_aggregate}, which can calculate the summation of teachers' local predictions without disclosing any of them. As we can see, the protocol mainly executes in two phases. In the first phase, the aggregator aggregates the encrypted noisy predictions as $E_{pk}(\sum_{i=1}^{N} \hat{\vec{y}}_i)$. Then in the second phase, a distributed decryption process is run to recover the aggregated noisy predictions $\textstyle\sum_{i=1}^{N} \hat{\vec{y}}_i$. In this protocol, what the aggregator received from all teachers are the encrypted noisy predictions and partial decryptions. Moreover, all the calculations on the aggregator are conducted on the encrypted data. What the aggregator can know is only the summation of all teachers' noisy predictions, based on which each teacher's local prediction $\vec{y}_i$ cannot be inferred, thereby providing aggregator obliviousness and significantly reducing privacy leakage. Note that the key generation needs to be done only once, hence secret-sharing protocols can be used for this purpose. 

\begin{algorithm}[htb]
\caption{Secure aggregation protocol}\label{protocol:sec_aggregate}
\begin{algorithmic}
\State
\begin{enumerate}
\item Each teacher $i$ encrypts its noisy prediction $\hat{\vec{y}}_i$ as $E_{pk}(\hat{\vec{y}}_i)$ as per Eq.~\eqref{eq:enc}, and sends it to the aggregator;
\item The aggregator computes $c=E_{pk}(\sum_{i=1}^{N} \hat{\vec{y}}_i)=\prod_{i=1}^{N} E_{pk}(\hat{\vec{y}}_i)$ based on Eq.~\eqref{eq:add_hor};
\item The aggregator sends $c$ to the randomly chosen $t$ teachers;
\item Each selected teacher $i$ calculates a partial decryption based on Eq.~\eqref{eq:partial_dec}, and sends it to the aggregator;
\item The aggregator combines all the partial decryptions to get the summation $\sum_{i=1}^{N} \hat{\vec{y}}_i$.
\end{enumerate}
\end{algorithmic}
\end{algorithm}

\emph{Fault Tolerance and Collusion}. In real system, one or more teachers might fail to respond or drop out the system at some point before the completion of the protocol for several different reasons. We also consider the threat of collusion among teachers, including the aggregator, through the trust parameter $t$ -- the minimum number of honest teachers. 
Our proposed DPPP can be made robust to the fault tolerance and collusion of less than 1/3 compromised teachers by adopting the following two solutions: (i) $(N,t)$-threshold decryption~\cite{damgaard2001generalisation} requires the cooperation of at least $t=\gamma N$ honest teachers for decryption, where $\gamma$ is the fraction of uncompromised teachers in Definition~\ref{def:ddp_database}. If $f$ teachers fail to send their partial decryptions, $(N,t)$-threshold decryption ensures that a decryption can still be computed as long as $f < N-t$; (ii) during the noise addition, for BM, each honest teacher sets its binomial noise as $z-m/2$, where $z \sim B(m,1/2), m=3n/2N$; for DGM, $r_i^j \sim DG(0,\sigma/\sqrt{2N/3})$ where $\sigma$ satisfies Eq.~\ref{eq:Analytic_Gaussian}, \ie leaving out $1/3$ teachers' randomness is still sufficient to ensure differential privacy. We remark that the number of honest parties $t$ could be set as per different applications, and $(N,t)$-threshold Paillier cryptosystem requires $2 \leq t \leq N$. Here the assumption of less than 1/3 compromised parties is often practical enough in most real scenarios. 

\section{Performance Evaluation}
\label{sec:Performance}
\textbf{Comparison frameworks.} To demonstrate the effectiveness of our DPPP, we compare it with the following frameworks: (1) \emph{Centralized non-private framework} requires all teachers to pool their training data into the aggregator to train a global model; 
(2) \emph{Distributed non-private framework} excludes both DP and cryptosystem, teachers directly share their local predictions with the aggregator; (3) \emph{Local differentially private (LDP) framework} excludes cryptosystem but requires each teacher to add the required level of noise to ensure ($\epsilon,\delta$)-LDP. The added noise is of the same level as the aggregated noise in DPPP, hence much more noise is added compared to the noise added in DPPP; (4) \emph{Standalone framework} allows teachers to individually train local models without any collaboration, and an end user directly sends a test query to one teacher, then local prediction is released under the guarantee of ($\epsilon,\delta$)-LDP; (5) PATE which relies on a trusted aggregator to add Laplace noise to the aggregated vote counts~\cite{papernot2017semi}. 

%%%%%%%%%%%%%%%%%%%%%%%%%%%%%%%%%%%%%%%%
\begin{figure*}[!htp]%[width=\linewidth,height=3.4cm]
\centering
	\begin{subfigure}[b]{0.3\textwidth}
                \includegraphics[width=\linewidth,height=3.6cm]{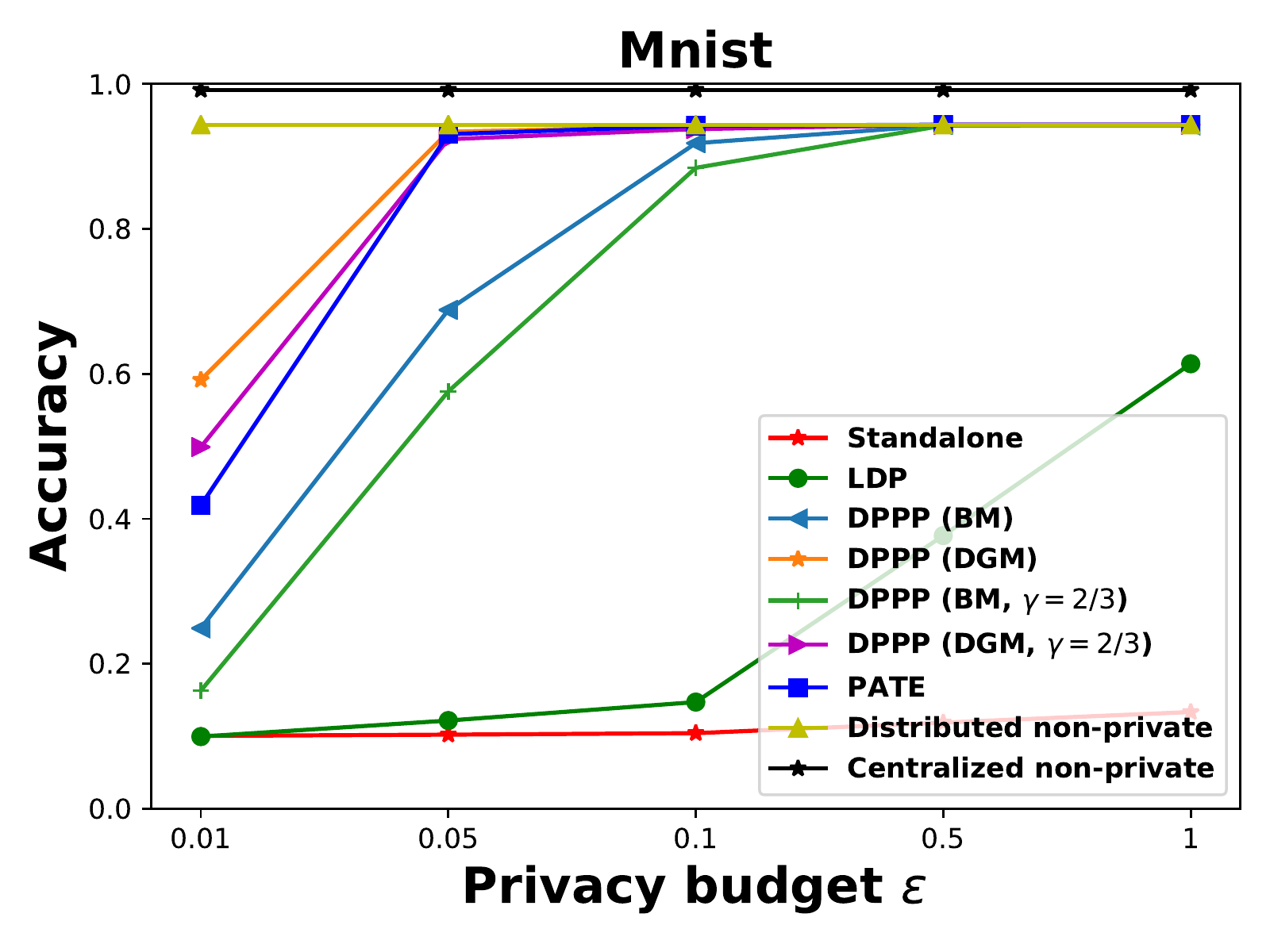}
%                \caption{Banana.}
				\label{fig:mnist_acc_BM}
        \end{subfigure}
        \begin{subfigure}[b]{0.3\textwidth}
                \includegraphics[width=\linewidth,height=3.6cm]{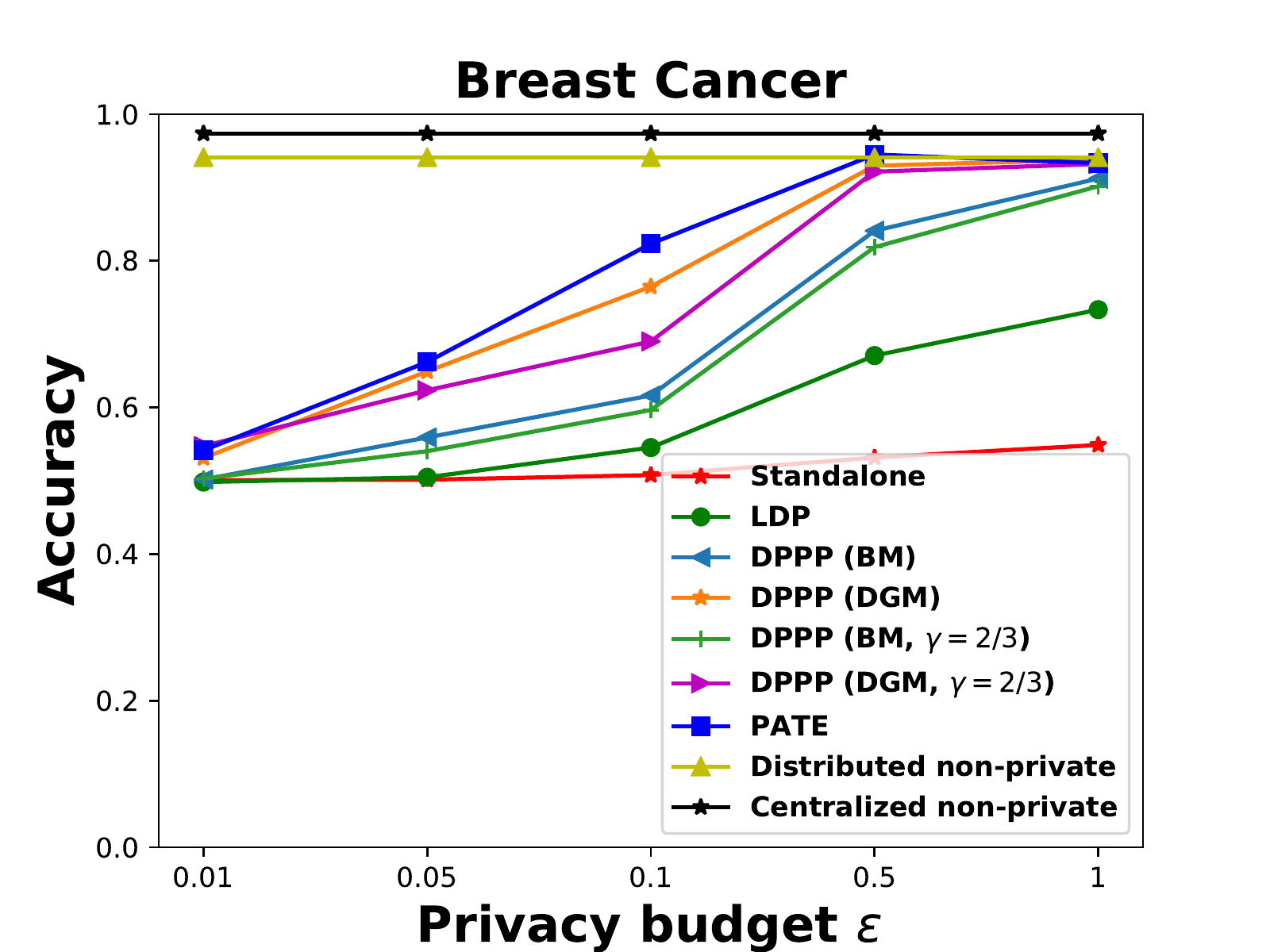}
%				\caption{Breast Cancer.}
				\label{fig:cancer_acc_BM}
        \end{subfigure}
         \begin{subfigure}[b]{0.3\textwidth}
                \includegraphics[width=\linewidth,height=3.6cm]{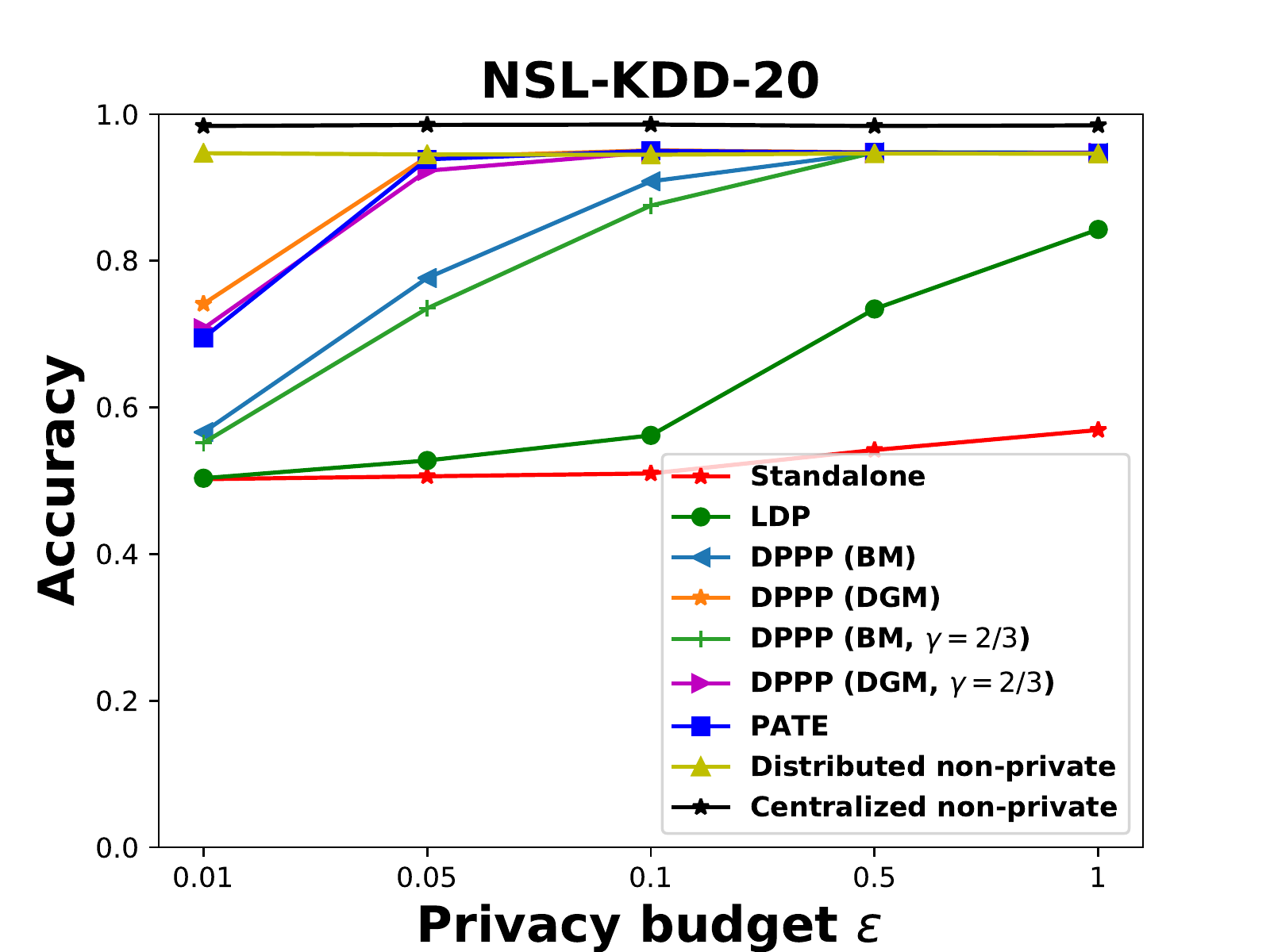}
%				\caption{Breast Cancer.}
				\label{fig:kdd20_acc_BM}
        \end{subfigure}
        \caption{Prediction accuracy of student queries for all datasets with varying $\epsilon$.}%$\emph{BM}$ 
\label{fig:acc_BM}
\end{figure*}
%%%%%%%%%%%%%%%%%%%%%%%%%%%%%%%%%%%%%%%%

%%%%%%%%%%%%%%%%%%%%%%%%%%%%%%%%%%%%%%%%
\begin{figure}[!htp]
\centering
	\begin{subfigure}[b]{0.24\textwidth}
                \includegraphics[width=\linewidth,height=3.4cm]{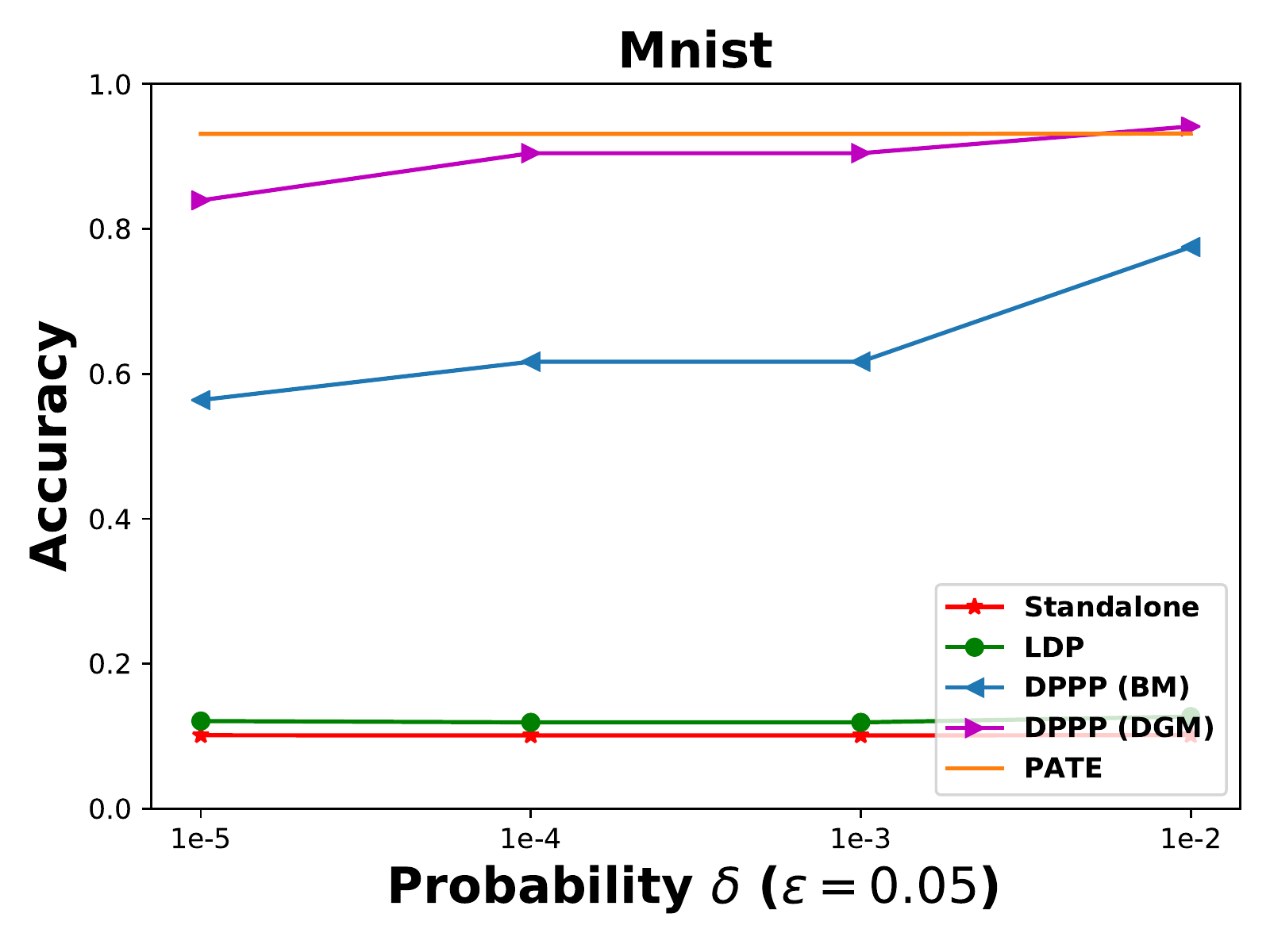}
%                \caption{Banana.}
				\label{fig:mnist_acc_BM_delta}
        \end{subfigure}
        \begin{subfigure}[b]{0.24\textwidth}
                \includegraphics[width=\linewidth,height=3.4cm]{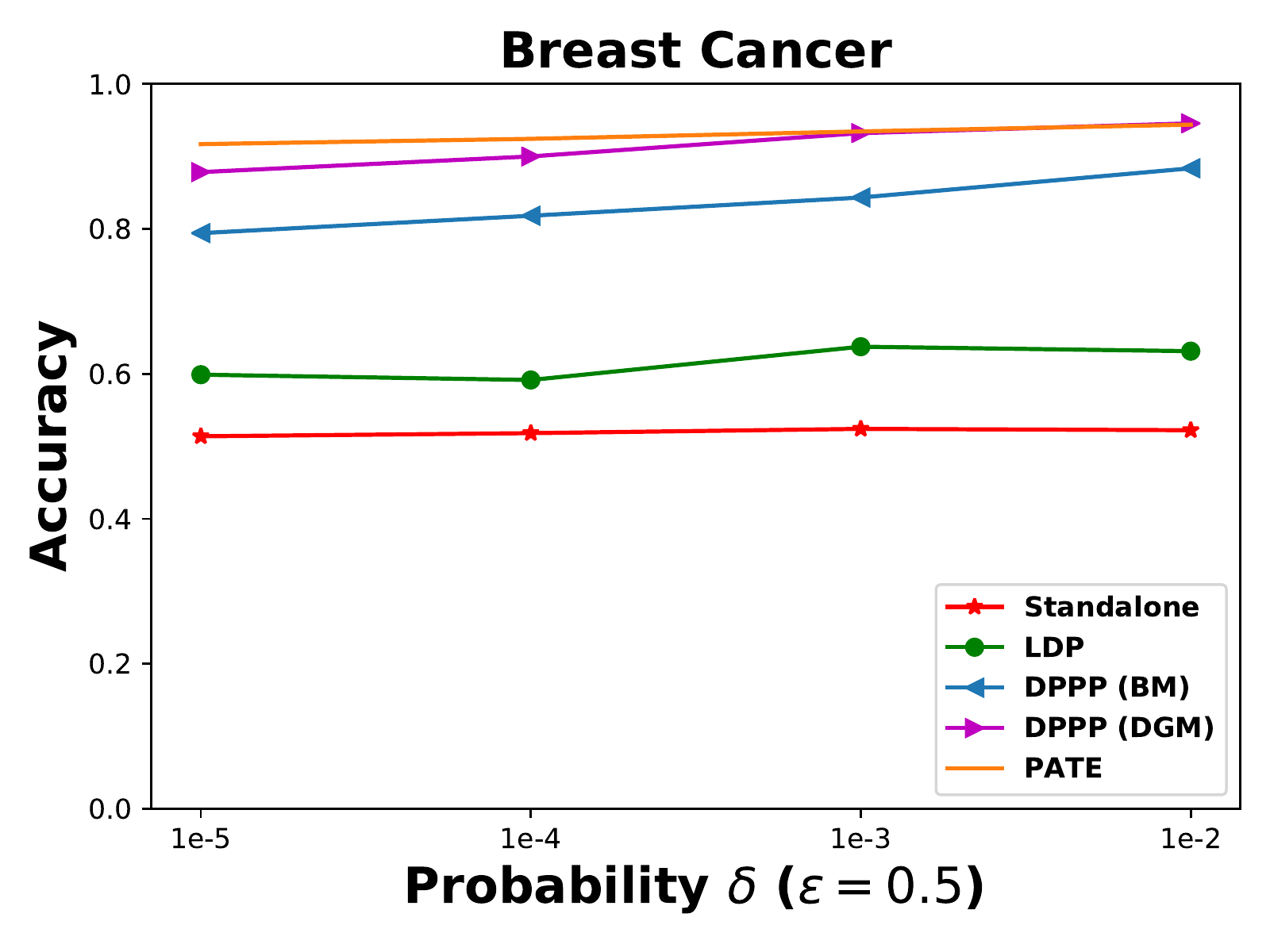}
%				\caption{Breast Cancer.}
				\label{fig:cancer_acc_BM_delta}
        \end{subfigure}
%         \begin{subfigure}[b]{0.24\textwidth}
%                \includegraphics[width=\linewidth,height=3.4cm]{KDD20_acc_delta_smc}
%%				\caption{Breast Cancer.}
%				\label{fig:kdd20_acc_BM_delta}
%        \end{subfigure}
        \caption{Prediction accuracy of student queries with varying $\delta$.}%$\emph{BM}$ for all datasets 
\label{fig:acc_BM_delta}
\end{figure}
%%%%%%%%%%%%%%%%%%%%%%%%%%%%%%%%%%%%%%%%

\textbf{Datasets.} For fair comparison with PATE, we first adopt MNIST dataset, which consists of 60K training examples and 10K testing examples\footnote{\url{http://yann.lecun.com/exdb/mnist/}}. We also investigate the other two real-world datasets. One is Breast Cancer dataset~\footnote{\url{https://archive.ics.uci.edu/ml/datasets/Breast+Cancer+Wisconsin+(Diagnostic)}} 
that contains total 569 records with 32 features, each record is classified into two classes: malignant and benign. We randomly sampled 2/3 examples from the whole database as the training set, while the remaining 1/3 as the test set. The other one is NSL-KDD dataset used for intrusion detection\footnote{\url{https://www.unb.ca/cic/datasets/nsl.html}}, 
which contains total 125973 records with 41 features, each record is classified into two classes: anomaly and normal. We select a smaller subset called NSL-KDD-20 (with 20\% train and test data sampled from NSL-KDD). 

\textbf{Experimental Setup.} For MNIST, we follow~\cite{papernot2017semi} to use a convolution NN (CNN) model. %We stack 
Each teacher trains a convolutional network with two convolutional layers and one fully connected layer. For the other datasets, we use SVM model with RBF kernel. To simulate the situation in which each teacher constitutes only a limited subset of the whole database, all the training records are randomly distributed among multiple teachers such that each teacher receives nearly the same amount of records. Following the rationales provided in~\cite{papernot2017semi}, we \emph{empirically find appropriate values of $N$} for all the datasets by measuring the test accuracy of each teacher trained on one of the $N$ partitions of the whole training set, \ie we trained ensembles of 250, 100, 100 and 20 teachers for the MNIST, NSL-KDD-20 and Breast Cancer datasets respectively. How the number of teachers affects privacy cost can be referred to~\cite{papernot2017semi}. We run each experiment for 20 times and report the average result. $\gamma$ is set to be 1 or 2/3, \ie assuming no compromised teachers, or at most 1/3 compromised teachers.  
Since $\delta$ in BM and DGM are different from the classical DP mechanisms, we choose small values from $1e^{\{-5,-4,-3,-2\}}$. 

\textbf{Experimental Results.} We first fix $\delta=1e^{-3}$ and report the prediction accuracy of student queries under varying privacy budgets $\epsilon \in [0.01,1]$ in Fig.~\ref{fig:acc_BM}. In particular, the result of the standalone framework is derived by averaging over all teachers. As evidenced by Fig.~\ref{fig:acc_BM}, DPPP outperforms both the standalone and LDP frameworks, for all datasets. The centralized and distributed non-private frameworks achieve similar accuracy, indicating that the distributed non-private framework incurs minor accuracy degradation compared with the centralized non-private framework. Moreover, DPPP yields comparable accuracy to the distributed non-private framework when $\epsilon \geq 0.05$ for MNIST and NSL-KDD-20 datasets, which is also comparable to PATE where each query has a low privacy budget of $\epsilon = 0.05$~\cite{papernot2017semi}. We also notice that compared with other datasets, Breast Cancer dataset requires a higher value of $\epsilon$ to achieve comparable accuracy. One reason is the limited available data split among smaller number of teachers, while compensating for the introduced noise in Eq.~\eqref{eq:BM} requires large ensembles. We also observe that DPPP with DGM consistently outperforms DPPP with BM, and achieves comparable performance to the PATE. The extra noise introduced to guarantee privacy under 1/3 compromised teachers indeed slightly degrades accuracy, especially when $\epsilon$ is small. These findings provide empirical supports to guide the deployment of DPPP  framework.

We further show how the values of $\delta$ impact the performance by varying $\delta \in 1e^{\{-5,-4,-3,-2\}}$. We choose MNIST and Breast Cancer for illustration, and adopt the fixed $\epsilon=0.05$ for MNIST and $\epsilon=0.5$ for Breast Cancer. As can be observed in Fig.~\ref{fig:acc_BM_delta}, the effectiveness of our proposed DPPP persists, and for the fixed $\epsilon$, varying $\delta$ follows the similar trend as varying $\epsilon$. However, there is less difference with different $\delta$, which agrees with the findings reported in~\cite{abadi2016deep}. In contrast, $\epsilon$ has larger impact on accuracy.

\textbf{Computation and Communication Overhead.}
We use the typical 1024-bit key size and implement a $(N,\lfloor \frac{2}{3}N \rfloor)$-threshold Paillier cryptosystem using Paillier Encryption Toolbox\footnote{\url{http://cs.utdallas.edu/dspl/cgi-bin/pailliertoolbox/}}. 
The average computation time at a teacher is independent of the number of teachers and remains nearly constant. On the other hand, the time required by the aggregator might increase with the number of teachers, but this can be reduced by running the aggregator and teachers in parallel through MapReduce. In all cases, the computation overhead is quite small (within $\sim$ms), most of which is spent on cryptographic operations. For communication complexity, a Paillier ciphertext is estimated as 2048 bits (256 bytes). Therefore, the total communication cost between the aggregator and any teacher can be estimated as $256\times c\times 3=768\times c$ bytes, where $c$ is the number of classes and $3$ refers to three rounds of communication in Fig.~\ref{fig:case1}, which is also well within the realm of practicality as $c$ is usually small. 

\textbf{Discussion.} Different from the typical balanced and IID data distribution, in real practice, due to the differences in sensor quality, ambient noise, and skill level, the collected data by each teacher might be: (1) Unbalanced: due to the capabilities of different teachers, some teachers may have large training data, while others have little or no data. For substantially unbalanced data, most teachers have only a few examples, and a few teachers have a large number of examples. (2) Non-IID: the collected data by each teacher might not be representative of the population distribution. 

These two aspects are usually considered in federated learning, and might affect the accuracy of DPPP, especially when most teachers have few or extremely unrepresentative examples. To improve robustness to unbalanced and/or non-IID data distributions, current methods allow teachers to share locally trained model updates with the aggregator~\cite{mcmahan2016federated,mcmahan2018learning}, but giving the aggregator access to all teachers' updates clearly risks privacy leakage. To privately share individual model updates, Bonawitz \etal~\cite{bonawitz2017practical} proposed a secure aggregation protocol to securely aggregate local model updates as the weighted average to update the global model on the aggregator. However, this incurs both extra computation and communication costs.

%%%%%%%%%%%%%%%%%%%%%%%%%%%%%%%%%%%%%%%
\section{Conclusion and Future Work}
\label{sec:Conclusion}
We have presented a distributed privacy-preserving prediction framework, which enables multiple parties to collaboratively deliver more accurate predictions through an aggregation mechanism. Distributed differential privacy via Binomial Mechanism or Discrete Gaussian Mechanism and homomorphic encryption are combined to preserve individual privacy, maintain utility and ensure aggregator obliviousness. For the Binomial Mechanism, we offer tighter bounds than that in the previous works. Preliminary analysis and performance evaluation confirm the effectiveness of our framework. We plan to extend our framework to the unbalanced and non-IID data distribution. We also expect to extend our framework to various machine learning scenarios beyond classification. It is also important to investigate how to conduct privacy accounting for many subsequent queries by using different DDP mechanisms. 
\bibliographystyle{IEEEbib}
\bibliography{biblio}
\end{sloppypar} 
\end{document}